\newcommand{\spara}[1]{\smallskip\noindent{\bf #1}}
\newcommand{\beql}[1]{\begin{equation}\label{#1}}
\newcommand{\eeq}{\end{equation}}
\newtheorem{theorem}{Theorem}
\newtheorem{definition}{Definition}
\newcommand{\field}[1]{\mathbb{#1}} 
\newcommand{\beq}[1]{\begin{equation}\label{#1}}
\newcounter{rot}
\newcommand{\ignore}[1]{}
\title{Towards Quantifying Vertex Similarity in Networks}
\begin{document}

\author{Charalampos E. Tsourakakis}
\address{Department of Mathematical Sciences\\
Carnegie Mellon University\\
5000 Forbes Av., 15213\\
Pittsburgh, PA \\
U.S.A} \email{ctsourak@math.cmu.edu}

\date{\today}

\begin{abstract}
Vertex similarity is a major problem in network science with a wide range of applications. In this work we provide novel perspectives on finding 
(dis)similar vertices  within a network and across two networks with the same number of vertices (graph matching). 
With respect to the former problem, we propose to optimize a geometric objective which allows us to express each vertex uniquely 
as a convex combination of a few extreme types of vertices. Our method has the important advantage of supporting efficiently several 
types of queries such as ``which other vertices are most similar to this vertex?'' by the use of the appropriate data structures and 
of mining interesting patterns in the network. With respect to the latter problem (graph matching), we propose the generalized condition 
number --a quantity widely used in numerical analysis--  $\kappa(L_G,L_H)$ of the Laplacian matrix representations of
$G,H$ as a measure of graph similarity,  where $G,H$ are the graphs of interest.
We show that this objective has a solid theoretical basis and propose a deterministic and a randomized
graph alignment algorithm. 
We evaluate our algorithms on both synthetic and real data. We observe that 
our proposed methods achieve high-quality results and provide us with 
significant insights into the network structure.  

\end{abstract}

\maketitle
\section{Introduction}
\label{sec:intro}
Vertex similarity is an important network concept with a broad range of significant applications. 
Paradoxically, a major step towards the successful quantification of vertex similarity is finding a good definition of it. 
Typically, one is interested either in finding similar vertices in a given network 
or finding similar vertices across two different networks. 
The former problem emerges in numerous applications in social networks such as 
link prediction and recommendation. 
It also emerges in the domain of privacy since the improved ability 
of predicting an edge may be used for malicious purposes as well \cite{hay}. 
The latter problem emerges also in various domains including graph mining, computer vision and chemistry. 
The interested reader is urged to read \cite{zager} which contains a wealth of applications. 
In this work we provide novel perspectives on the two aforementioned problems.
On purpose we state them abstractly using quotes in several places, in order to emphasize that two main contributions of our work are two novel formalizations of these problems. 
The first problem is: {\em given an undirected graph $G(V,E)$ and two vertices $u,v \in V$, how ``similar'' are $u$ and $v$?}
The second problem one is: {\em given two graphs $G(V_G,E_G)$, $H(V_H,E_H)$ such that $|V_G|=|V_H|$ is there a permutation of the vertices of $H$ 
that ``reveals any similarities'' between $G$ and $H$? Can we find such a permutation efficiently?}
We will refer to these problems as the {\it vertex similarity} and the {\it graph matching} problem respectively.

\begin{figure*}
  \centering
  \subfloat[Vertex Similarity]{\label{fig:fig1a}\includegraphics[scale=0.4]{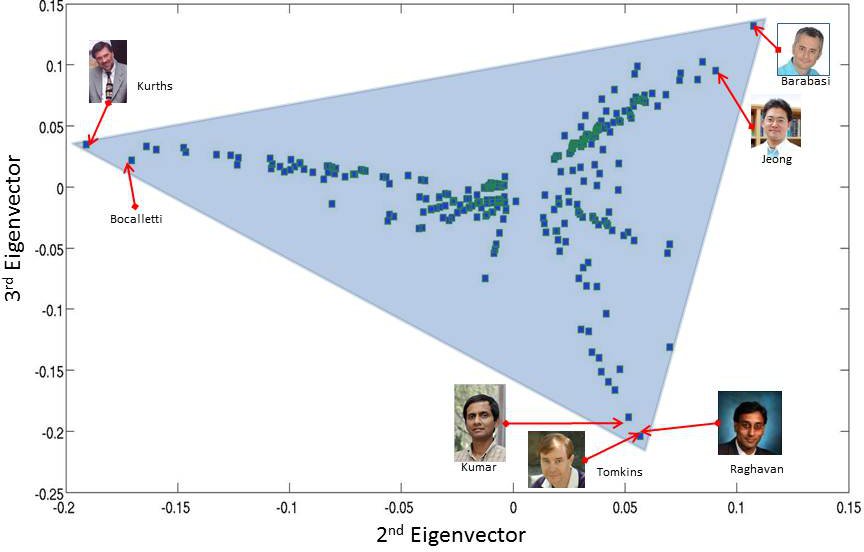}}
  \subfloat[Graph Matching]{\label{fig:fig1b}\includegraphics[scale=0.35]{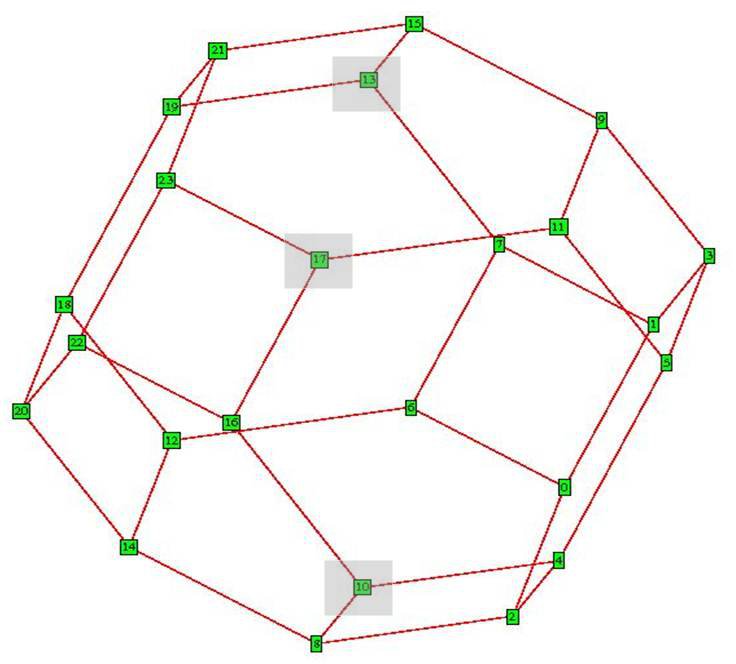}}
  \caption{ (a) Minimum area 2-simplex $\mathcal{S}$ for an informative embedding of the largest component of the Netscience network $G$, see Table~\ref{tab:datasets} for the  dataset details. $\mathcal{S}$ allows us to express each data point as a unique convex combination of its extreme points and hence  call two vertices of $G$ similar if their corresponding mixture coefficients are close. (b) 
   Permutahedron $\mathcal{P}$ of the symmetric group $S_4$. The 3 shaded vertices of $\mathcal{P}$ define a hypothetical set of isomorphisms between $G$ and  $H$. }
  \label{fig:fig1}
\end{figure*}

\spara{Paper contributions and roadmap.}
Our contributions are summarized as follows.

\begin{itemize}
 \item For the vertex similarity problem: 
 \begin{itemize}
 \item We propose a novel approach which is inspired by the concept of archetypal analysis \cite{cutler2}.
 We formalize our problem as optimizing a geometric objective, 
 namely finding an enclosing simplex of minimum volume that is robust to outliers for a special cloud of points.
 \item We propose an efficient algorithm for optimizing our objective.
 An output example of our method is shown in Figure~\ref{fig:fig1}(a) which 
 shows a minimum area 2-simplex $\mathcal{S}$ for a normalized Laplacian graph 
 embedding \cite{belkin} of the largest component of the Netscience network, see Table~\ref{tab:datasets}. 
 Using the Euclidean distance between the mixture coefficients (the smaller the distance is, the more similar the vertices are) we find  
 that the vertices `Prabhakar Raghavan', `Ravi Kumar' and `Andrew Tomkins' are highly similar
 and that the  three extreme points (archetypes) of  $\mathcal{S}$ correspond to three influential groups of researchers. 
 Specifically, the three vertices of the simplex lie close to Kurths and Bocalletti, Barabasi and Jeong, Kumar, Raghavan, Tomkins and Rajagopalan
 which are respectively three authoritative groups of researchers on social networks. 
 \item Our method has the advantage of supporting {\em efficiently} queries of the type ``which other vertices are most similar to this vertex?'', 
 ``which are the most dissimilar vertices to this vertex?'', by the use of the appropriate data structures  \cite{mount}.
\end{itemize}
\item For the graph matching problem:
 \begin{itemize}

\item We introduce a novel criterion of similarity between two graphs $G,H$, 
the {\em generalized condition number} of their Laplacian matrix representations. 

\item In constrast to frequently used heuristics, our criterion has a solid theoretical basis. 
Specifically, consider Figure~\ref{fig:fig1}(b) which shows the permutahedron $\mathcal{P}$ of the symmetric group $S_4$ \cite{burk}
with three shaded vertices which correspond to three hypothetical permutations which make graph $H$ identical to graph $G$. 
Theorem~\ref{thrm:thrm1}  proves  that the global minimum of our proposed objective function
occurs at the permutations that make $H$ equal to $G$. 
Specifically, our proposed Metropolis chain
in the limit of $\lambda \rightarrow +\infty$ ($\lambda \geq 1$ is a parameter in our chain)  converges to the uniform distribution over 
the shaded set, i.e., $\pi\Big({\text{Shaded Vertex}}\Big) = 1/3$, $\pi\Big( {\text{Non-Shaded Vertex}} \Big) = 0$.

\item Despite the fact that in this work we focus on the restricted version of the graph matching problem where 
$G,H$ have the same number of vertices, this does not make our work useless for two main reasons. 
First, there exist applications where $|V_G|=|V_H|$ \cite{zager}. Secondly, 
our conceptual contribution is likely to be extendable to the more general case where $|V_G| \neq |V_H|$, see Section~\ref{sec:disc}.

\item Our proposed method can also be used in conjuction with other graph matching 
methods, e.g., as a postprocessing tool. We explore this possibility 
in Section~\ref{sec:exp}, where we show that this approach yields excellent practical performance.
\end{itemize}
\end{itemize}

The paper is organized as follows: Section~\ref{sec:related} presents briefly related work. Sections~\ref{sec:vertexsim} and~\ref{sec:condsim} present 
our proposed methods for the vertex similarity and the graph matching problem respectively. 
Section~\ref{sec:exp} shows an experimental validation and evaluation of our proposed methods. 
Finally, Section~\ref{sec:disc} concludes the paper.

\section{Related Work}
\label{sec:related}
In Sections~\ref{subsec:prelimvertexsim},~\ref{subsec:prelimarchetypes},~\ref{subsec:prelimunmix}
we present work related to our proposed method in Section~\ref{sec:vertexsim}. 
In Sections~\ref{subsec:prelimgraphmatching},~\ref{subsec:prelimcondnum} 
we present work related to our proposed method in Section~\ref{sec:condsim}. 

\subsection{Vertex Similarity}
\label{subsec:prelimvertexsim}

The key idea that appears in different guises 
in the literature related to the vertex similarity problem is the following: 
two vertices are similar if their neighbors are similar. 
The recursive nature of this idea leads to recursive algorithms.
It is worth pointing out that other measures of similarity exist: 
the number of common neighbors, Jaccard's coefficient, Salton's coefficient,    
the Adamic/Acar coefficient \cite{adamic} etc. 
These measures have significant shortcomings. 
For instance, two vertices may be highly similar even if they share no common neighbors \cite{leicht}. 

The algorithm that has influenced and motivated a large and significant part
on vertex similarity is Kleinberg's HITS algorithm \cite{hits}. 
Interestingly, Blondel et al. \cite{blondel2}  generalized HITS 
and provided a general scheme for finding the similarity of two vertices. 
Jeh and Widom proposed the Simrank algorithm \cite{simrank} to compute all-pairs vertex 
similarities in a graph.  Leicht et al. propose another recursive 
measure of similarity closely resembling the centrality measure of Katz \cite{leicht}.
Recursive algorithms are closely connected to spectral graph theory.
Additionally, spectral graph theory through random walks provides the basis for a rich set of similarity measures including commute times
and graph kernel methods \cite{fouss}.
 Recently, non-negative matrix factorization \cite{arora} has been proposed  
in the context of role identification in social networks \cite{henderson2012rolx}.

Vertex similarity has numerous applications such as link recommendation \cite{liben}, schema matching \cite{hector} and privacy attacks \cite{hay}.
Our geometric perspective on the problem of vertex similarity in Section~\ref{sec:vertexsim} has not been considered 
in the literature to the best of our knowledge.

\subsection{Archetypal Analysis} 
\label{subsec:prelimarchetypes} 

The idea of archetypal analysis was born by Breiman during his work on predicting the next-day ozone levels. Breiman
proposed that each day could be quantified as a mixture of ``extreme'' or ``archetypal'' days \cite{cutler}. 
Culter and Breiman introduced archetypal analysis and proposed an alternating minimization procedure \cite{cutler2}. 
Archetypal analysis has numerous applications in various fields including computational biology \cite{huggins} and marketing \cite{whitepaper}.

\subsection{Spectral Unmixing} 
\label{subsec:prelimunmix}

Spectral unmixing is a central problem in spectral imaging. 
Keshava \cite{keshava} surveys existing algorithms for this problem. 
Of special interest to us is the geometric approach, inspired by Craig's seminal work \cite{craig},
where a minimum volume simplex is fitted to the set of points.  

The computational complexity of fitting a minimum volume enclosing simplex depends on the dimensionality $k$. 
Specifically, when $k=2$ there exist efficient algorithms for finding the minimum area enclosing triangle 
\cite{rourke}. When $k=3$, Zhou and Suri give an algorithm with complexity $O(n^4)$ \cite{suri}.
Packer showed that the problem is NP-hard when $k \geq \log{(n)}$ \cite{packer}.


\subsection{Graph Matching} 
\label{subsec:prelimgraphmatching} 

The graph matching problem has attracted a lot of interest  \cite{letters}. 
Umeyama proposed that  instead of trying to find a permutation matrix $P$, i.e., one of the vertices of 
the Birkoff polytope \cite{burk} (it is a polytope whose vertices correspond to permutation matrices,
see the closely related permutahedron in Figure~\ref{fig:fig1}(b)), that minimizes $|| PA_GP^T - A_H||$ where 
$A_G,A_H$ are the adjacency matrix representations of graphs $G,H$ one may relax the problem to finding an orthogonal matrix
$P'$ that minimizes the same objective \cite{umeyama}. 
Other methods relax the constraint of searching for a permutation matrix $P$ to finding a doubly stochastic matrix.

Spectral approaches play a prominent role in matching two shapes,
a key problem in computer vision. 
The problem of shape matching upon preprocessing reduces to 
graph matching \cite{shapiro}. 
Spectral methods take as input two weighted graphs, each representing a shape
and consist typically of three steps \cite{bronstein}. 
We believe that there may be fruitful connections
between our proposed method in Section~\ref{sec:condsim} and the first two steps 
of these methods.
In the first step, the Laplacian embedding of the two shapes is computed.
In the second step,  a permutation matrix $P$ and a sign matrix $S$ which matches the first
$k$ eigenvectors of these shapes to each other are computed. 
Typically, this is done by finding the minimal cost assignment between these vectors. 
The final step is the point registration of these two aligned embeddings, e.g., by using 
the EM-algorithm.  

When the two graphs of interest have different number of vertices, the typical representation is 
the compatibility graph. Using this   representation, the graph matching problem 
can be formulated as an integer quadratic problem (IQP) which can be tackled in various ways. 
Dominating approaches include semidefinite programming \cite{sdp}, spectral approaches \cite{bai},
linear programming relaxations \cite{klau} and the popular graduated assignment method \cite{gold}. 
The latter, relaxes the IQP into a non-convex quadratic program and solves a sequence of convex optimization approximation problems. 
Blondel et al. \cite{blondel2} use a generalization of HITS method \cite{hits} to find graph matchings. 
As we mentioned in Section~\ref{subsec:prelimvertexsim} their method is also applicable to the vertex similarity
problem. Other approaches include Belief Propagation (BP) \cite{gleich} and kernel-based methods \cite{smalter}.

\subsection{Generalized Condition Number} 
\label{subsec:prelimcondnum}

A fundamental problem of linear algebra is solving the linear system of equations $Ax=b$ \cite{golubvanloan}. 
In the case of a preconditioned linear system the corresponding quantity that determines the rate of 
convergence of the solver, e.g., preconditioned conjugate gradient, is the generalized condition number \cite{golub}. 
The definition follows:

\begin{definition}[\cite{gremban}]
Let $A,B$ be two real matrices with the same null space $\field{K}$. 
$\lambda$ is a generalized eigenvalue of the ordered pair of matrices $(A,B)$, also called pencil,
if there exists a vector $x \notin \field{K}$ such that $Ax=\lambda Bx$. Let $\Lambda(A,B)$ the set of generalized
eigenvalues of the pencil $(A,B)$. The generalized condition number $\kappa(A,B)$  is defined as the ratio of the maximum
value $\lambda_{\max}(A,B)$ to the minimum value  $\lambda_{\min}(A,B)$.
\end{definition}

\noindent For every unit norm vector $x$ the following double inequality holds:
\begin{equation}
\lambda_{\min}(A,B) x^T B x \leq x^T A x \leq \lambda_{\max}(A,B) x^T B x.
 \label{eq:eq1}
\end{equation}

\noindent For the special case of interest where the pencil $(L_G,L_H)$ is a pair of Laplacian matrices of two connected
graphs $G,H$ on the same vertex set the condition number is given by the following expression:

$$\kappa(L_G, L_H) = \left( \max_{x^T1 =0} \frac{x^TL_Gx}{x^TL_Hx} \right)  \left( \max_{x^T1 =0} \frac{x^TL_Hx}{x^TL_Gx} \right).$$

Notice that we since $G,H$ are connected their null space is the same and specifically the span of the all-ones vector ${\bf 1}$ \cite{golub}.
Generalized eigenvalue problems of a special form have several important applications in computer science, see for instance \cite{belkin,vision}.

\section{VertexSim: Vertex Similarity via Simplex Fitting} 
\label{sec:vertexsim} 
The main assumption of our proposed method is that each vertex is a ``combination'' of  few ``extreme'' types of vertices.  
This assumption lies conceptually close to archetypal analysis \cite{cutler}.
We formalize mathematically the notions of ``combination'' and ``extreme'' geometrically in the following. 

Our proposed algorithm  is VertexSim shown as Algorithm 1.
The algorithm takes as input the graph $G(V=[n],E)$,  
the dimension of the simplex we wish to fit 
and a parameter $\gamma$ which tunes the sensitivity of the fitting 
algorithm to outliers. We assume that $k \ll n$. 
In the first step, we embed the graph $G$ on the Euclidean space $\field{R}^k$.
Hence, each vertex is mapped to a $k$-dimensional point. 
It is worth emphasizing that typically real-world networks of small and medium size (up to several thousands 
of vertices and edges) have strong geometric structure. As the size grows, the geometry becomes
less apparent, see also Section~\ref{sec:disc}. 
There exist several methods to obtain an informative embedding of the graph. 
The majority of them are spectral \cite{lee}. 
In our experiments we choose the $k$ smallest non-trivial eigenvectors, i.e., the eigenvectors
corresponding the $k$ smallest, non-zero eigenvalues of the normalized Laplacian  \cite{belkin}.

\begin{algorithm} 
 \caption{VertexSim}
 \label{alg1}
 \begin{algorithmic} 
  \REQUIRE Connected, undirected graph $G([n],E)$. Dimension $k$. Parameter $\gamma$. 
  \STATE (1) Embed the graph using the $k$ smallest non trivial eigenvectors of the normalized Laplacian of $G$. 
  \STATE (2) $[K,\{ \theta_i\}_{i \in [n]}] \leftarrow$ Solve Optimization Problem~\ref{prog:robustmix1} using gradient descent.   
  \STATE (3) For every pair of vertices $(i,j)$ compute the Euclidean distance between the mixture 
   coefficient vectors $\theta_i,\theta_j$. 
  \STATE (4) Add points $\{ \theta_i\}_{i \in [n]}$ to a data structure supporting nearest neighbor search queries. 
 \end{algorithmic}
\end{algorithm} 

In the second step,  we learn a simplex, i.e., a set of $k+1$ affinely independent points,
which encloses the cloud of points. The $k+1$ vertices of the simplex are the extreme types of vertices
and each vertex is a convex combination of these types. 
The rationale behind the choice of a simplex is that  each point is expressed
uniquely as a convex combination of the extreme points. 
This allows us to perform a quantitative analysis of vertex similarity
and answer queries 
such as ``which are the three vertices most similar to vertex $v$?'' 
with the use of appropriate data structures \cite{mount}. 
Among all simplexes that fit the cloud of points we favor the one with the smallest volume,
inspired by the seminal work of Craig \cite{craig}.
There exist a wide variety of off-the-shelf algorithms that compute a minimum volume enclosing simplex
and reliable implementations are publicly available. We use the method developed by \cite{tolliver} which solves
the following optimization problem, where $X=\{x_1,\ldots,x_n\} \subseteq \field{R}^k$ is the cloud of points,
$K=[v_0|\ldots|v_k]$ is a simplex in $\field{R}^k$ and $\theta_i \in [0,1]^{k+1}$ for $i=1,\ldots,n$ is the vector of mixture coefficients 
of point $i$:

\begin{eqnarray} 
	\label{prog:robustmix1} ~~~~ ~~~~ \nonumber
	\underset{K,\theta}{min}&:& \sum_{i=1}^s \left |x_i - K \theta_i \right |_p + \gamma \log {\tt vol}(K)\\ \nonumber 
         \\
         \forall \theta_i &:& \theta_i^T{\bf 1} = 1,~\theta_i \succeq 0   
\end{eqnarray}

\noindent The first term in the objective makes the formulation robust to outliers, see \cite{tolliver}. 
We use $|x|_p$ to denote the $p$-norm of vector $x$. We choose $p=1$.
We need to derive the necessary partial derivatives. For completeness we include here the computation. 
Let the simplex be represented by the vertex matrix $K=\left [v_0|...|v_k \right ]$. Then, 
\begin{equation} 
	{\tt vol} (K) = c_k \cdot {\tt det}  \left( \Gamma^T K K^T \Gamma  \right)^{1/2} = c_k \cdot \sqrt{ {\tt det}~Q }
\end{equation} 
where $c_k$ is the volume of the unit simplex defined on $k+1$ points and $\Gamma$ is a fixed vertex-edge incidence matrix such that $\Gamma^T K = \left [v_1-v_0|...|v_k-v_0 \right ]$. 
It follows that 

\begin{eqnarray}
\nonumber
\log {\tt vol}(K) &=&\log c_k + \frac{1}{2} \log {\tt det} Q  \propto \log \prod_{d=1}^k \lambda_d(Q) = \sum_{d=1}^k  \log  \lambda_d(Q). \nonumber 
\end{eqnarray}
Hence, the gradient of $\log {\tt vol}(K)$ is given by 
\begin{eqnarray} 
	\nonumber
\frac{\partial \log {\tt vol(K)}}{\partial K_{ij}} &=& \sum_{d=1}^k \frac{\partial }{\partial K_{ij}} \lambda_d = \sum_{d=1}^k z_d^T(\Gamma^T E_{ij} E_{ij}^T \Gamma) z_d 
\end{eqnarray}
where the eigenvector $z_d$ satisfies the equality $Q z_d = \lambda_d z_d$ and $E_{ij}$ is the indicator matrix for 
the entry $ij$. To minimize the volume, we move the vertices along the paths specified by the negative $\log$ 
gradient of the current simplex volume. 

Finally, in the third and fourth step we compute the similarity between vertices based on 
the set of mixture coefficients. Specifically, we use the Euclidean distance of mixture coefficients
and a data structure which supports nearest neighbor queries \cite{mount} to answer quickly 
typical queries as the ones we have mentioned before. 

It is worth emphasizing that the vertices of the fitted simplex may reveal structure in the network. 
Depending on the network a domain expert can interpret their meaning. In the networks we use 
in Section~\ref{sec:exp} the interpretation is straightforward. 
Because of the special importance of the simplex vertices, we shall refer to them as {\em social network archetypes}.
It is worth noticing that our proposed formulation compared to the $k$-community literature
allows us to mine the graph even when there are no well-shaped clusters, see for instance 
Figures~\ref{fig:fig1}(a),~\ref{fig:stratified} and~\ref{fig:fig4}.

\section{CondSim: Graph Similarity and the Generalized Condition Number} 
\label{sec:condsim} 
\subsection{Theoretical Result and Algorithms}
\label{subsec:condsimalgo} 

Let $G([n],E_G), H([n],E_H)$ be connected graphs on $n$ vertices (labeled for simplicity $\{1,2,..,n\}=[n]$) and $L_G, L_H$ their Laplacian matrix  representation respectively. 
Also, let $\Lambda(L_G,L_H)$ and $\kappa(L_G,L_H)$ be the set of generalized eigenvalues and the generalized condition number of the pencil $(L_G,L_H)$ \cite{golub,golubvanloan}. 
We use $S_n$ to denote the symmetric group, i.e., the group whose elements are all the permutations of the set $[n]$ and whose group operation is the composition of such permutations. 
We denote with $L_{G^(\sigma)}$ where $\sigma \in S_n$ the Laplacian matrix representation of the graph $G$ whose vertex set  has been renamed according to $\sigma$, 
i.e., $v \mapsto \sigma(v)$ for all $v \in [n]$. Our main result is the next theorem. 

\begin{theorem} 
Let $\Omega$ be the state space representing the set of all permutations $\{\sigma: \sigma \in S_n\}$, 
and $f:\Omega \rightarrow \field{R}^+$  be a function defined by $f(\omega) = \kappa(L_G, L_{H^{(\omega)}})$ for all $\omega \in \Omega$.  
Also, fix $\lambda \geq 1$ and define $\pi_{\lambda}(\omega) = \frac{ \lambda^{-f(x)} }{Z(\lambda)}$ 
where $Z(\lambda)=\sum_{x \in \Omega} \lambda^{-f(x)}$ is the normalizing constant  that makes $\pi_{\lambda}$ a probability measure. 
We define a Metropolis chain where we allow transitions between two states if and only if they differ by a transposition
as follows: if $f(\omega_1) < f(\omega_2)$ the Metropolis Chain accepts the transition $\omega_1 \rightarrow \omega_2$ with probability
$\lambda^{ f(\omega_1)-f(\omega_2) }$ otherwise always accept it. 

As $\lambda \rightarrow +\infty$ the stationary distribution $\pi_{\lambda}$ of the Metropolis chain
converges to the uniform distribution over the global minima of $f$. 
Furthermore, if $G \sim H$, i.e., $G,H$ are isomorphic, then  $\pi_{\lambda}$ converges to the uniform distribution over the set of 
isomorphisms $\{\sigma: L_G = L_{H^{(\sigma)}}, \sigma \in S_n \}$.
\label{thrm:thrm1}
\end{theorem}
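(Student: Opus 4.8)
The plan is to split the statement into three ingredients: (i) that the Gibbs measure $\pi_\lambda$ is the unique stationary distribution of the stated Metropolis chain; (ii) that $\pi_\lambda$ concentrates on the global minimizers of $f$ as $\lambda \to \infty$; and (iii) that, when $G \sim H$, these minimizers are precisely the relabelings $\sigma$ with $L_G = L_{H^{(\sigma)}}$.

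For (i) I would observe that the proposal mechanism --- draw a transposition $\tau$ uniformly from the $\binom{n}{2}$ transpositions and propose $\tau\omega$ --- is symmetric, since every transposition is an involution, so the probability of proposing $\omega'$ from $\omega$ equals that of proposing $\omega$ from $\omega'$. The stated acceptance rule is then exactly the Metropolis--Hastings rule $\min\{1, \pi_\lambda(\omega')/\pi_\lambda(\omega)\} = \min\{1, \lambda^{f(\omega)-f(\omega')}\}$ for a symmetric proposal, so detailed balance holds with respect to $\pi_\lambda$ and $\pi_\lambda$ is stationary. Because transpositions generate $S_n$ and every acceptance probability is strictly positive for finite $\lambda$, the chain is irreducible, which for a finite chain already forces the stationary distribution to be unique; hence it is $\pi_\lambda$.

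Ingredient (ii) is a direct computation on the closed form. Writing $M = \min_{\omega} f(\omega)$ and $\Omega^\ast = \{\omega : f(\omega) = M\}$, I multiply numerator and denominator by $\lambda^{M}$ to get $\pi_\lambda(\omega) = \lambda^{M-f(\omega)} / \sum_{x \in \Omega} \lambda^{M-f(x)}$. Every exponent $M - f(x) \le 0$, vanishing exactly on $\Omega^\ast$, so as $\lambda \to \infty$ each term with $f(x) > M$ tends to $0$ while each term with $x \in \Omega^\ast$ stays equal to $1$; the denominator tends to $|\Omega^\ast|$, giving $\pi_\lambda(\omega) \to 1/|\Omega^\ast|$ for $\omega \in \Omega^\ast$ and $\pi_\lambda(\omega) \to 0$ otherwise, i.e.\ convergence to the uniform law on $\Omega^\ast$.

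The step carrying the real content is (iii). First note the pencil is well posed: relabeling preserves connectivity, so $L_{H^{(\omega)}}$ shares the null space $\mathrm{span}(\mathbf 1)$ with $L_G$. Writing $a = \max_{x^T \mathbf 1 = 0}(x^T L_G x)/(x^T L_{H^{(\omega)}} x)$ and $b = \max_{x^T \mathbf 1 = 0}(x^T L_{H^{(\omega)}} x)/(x^T L_G x)$, the two Rayleigh bounds chain to $x^T L_G x \le a\, x^T L_{H^{(\omega)}} x \le ab\, x^T L_G x$ for all $x \perp \mathbf 1$, so $f(\omega) = ab = \kappa \ge 1$. If $\kappa = 1$ this sandwich collapses, forcing $x^T L_G x = a\, x^T L_{H^{(\omega)}} x$ on $\mathbf 1^\perp$, which together with the shared null space $\mathrm{span}(\mathbf 1)$ yields $L_G = a\, L_{H^{(\omega)}}$ as matrices. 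Taking traces and using $\mathrm{tr}(L_{H^{(\omega)}}) = \mathrm{tr}(L_H) = \mathrm{tr}(L_G)$ when $G \sim H$ (the trace is conjugation-invariant, and isomorphic graphs have equal total edge weight) pins $a = 1$, so $L_G = L_{H^{(\omega)}}$. Since $G \sim H$ produces at least one such $\omega$, we get $M = 1$ and $\Omega^\ast = \{\sigma : L_G = L_{H^{(\sigma)}}\}$; combined with (ii) this is the asserted convergence to the uniform distribution over the isomorphisms. I expect this equality-characterization of the condition number, together with fixing the proportionality constant to $1$ via the trace, to be the main obstacle; the Metropolis construction and the Gibbs-concentration limit are routine.
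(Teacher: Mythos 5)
Your proof is correct, and its overall skeleton --- stationarity of $\pi_\lambda$, Gibbs concentration as $\lambda \to +\infty$, then characterization of the global minimizers --- matches the paper's; the concentration computation is essentially identical. Where you genuinely diverge is in the key step (iii). The paper argues via the spectral theory of the Hermitian semidefinite pencil $(L_G,L_H)$: it takes a full basis of generalized eigenvectors $u_0,\ldots,u_{n-1}$, asserts that $\kappa=1$ forces the nontrivial generalized eigenvalues to all equal $1$, and concludes that $X=L_G-L_H$ annihilates a basis, hence $X=0$. You instead run a Rayleigh-quotient sandwich: $\kappa=ab=1$ collapses the chain of inequalities to $x^TL_Gx = a\,x^TL_{H^{(\omega)}}x$ on $\mathbf{1}^\perp$, which with the shared null space gives $L_G = a\,L_{H^{(\omega)}}$, and you then pin $a=1$ by the trace/edge-count argument available when $G\sim H$. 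Your route buys two things: it avoids invoking the existence of a generalized eigenbasis for a semidefinite pencil, and, more substantively, it closes a step the paper leaves implicit --- $\kappa=1$ by itself only says the nontrivial generalized eigenvalues share some common value $c$, and the paper silently takes $c=1$; your trace normalization is precisely the missing justification (for unweighted graphs one could alternatively compare off-diagonal entries of $L_G = cL_H$ to force $c=1$). The paper's route is shorter if one grants the cited pencil theory. Finally, you verify detailed balance, irreducibility, and uniqueness of the stationary distribution explicitly, where the paper simply cites the Metropolis-chain literature; that part is routine either way.
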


\begin{proof} 
Recall that  the Laplacian representation of a connected graph  is a symmetric, positive semidefinite matrix and that 
the dimension of its null space is 1 (the all-ones vector ${\bf 1}$).
Consider now the generalized eigenvalue problem $L_G x = \lambda L_H x$.
The pencil $(L_G,L_H)$ is Hermitian semidefinite. Therefore there exists a basis of generalized eigenvectors \cite{golubvanloan}. 
Notice that the all-ones vector ${\bf 1}$ is a generalized eigenvector with corresponding generalized eigenvalue 0.
Let $\Lambda(L_G,L_H)=\{0=\lambda_0< \lambda_1 \leq \ldots \leq \lambda_{n-1}$ be the set of generalized eigenvalues.
Then, $\kappa(L_G,L_H) = \frac{\lambda_{n-1}}{\lambda_1}$\footnote{Notice that despite the fact that our matrices are positive semidefinite,
and not positive definite, this doesn't cause any real problem with respect to defining the generalized condition number since $L_G,L_H$ have
the same null space.}. We prove that $\kappa(L_G,L_H)=1$ if and only $G \sim H$. 

\noindent
\underline{ $\bullet$ {\sc $\kappa(L_G,L_H)=1 \Rightarrow  G \sim H$ :}}

The generalized eigenvalues are $\lambda(L_G,L_H)= ( 0=\lambda_0< 1= \lambda_1 = \ldots = \lambda_{n-1})$. 
Let $( {\bf 1} = u_0, u_1,\ldots,u_{n-1})$ be the corresponding generalized eigenvectors which {\em form a basis}. Define $X = L_G-L_H$. 
Notice that $Xu_i=0$ for all $i=0,..,n-1$. Hence, $X=0$ and therefore $L_G=L_H \rightarrow G \sim H$. 

\noindent
\underline{$\bullet$ {\sc $ G \sim H \Rightarrow \exists \sigma \in S_n \text{~s.t.~} \kappa(L_G,L_{H^{(\sigma)}})=1 $ :}}

Since $G \sim H$ there exists a permutation $\sigma \in S_n$ such that $L_G = L_{H^{(\sigma)}}$. 
Simply, by substituting the eigenvectors $\{u_i\}_{i=0,..,n-1}$ of $L_G$ in $L_G x = \lambda L_{H^{(\sigma)}}x = \lambda L_G x$ 
we obtain that the generalized eigenvalues are $(0,1,1,..,1)$ and the corresponding eigenvectors  $( {\bf 1} = u_0, u_1,\ldots,u_{n-1})$. 
Hence, $\kappa(L_G,L_{H^{(\sigma)}})=1$. 

Now, define $\Omega^* = \{ \omega \in \Omega: f(\omega) = f^* = \min_{x \in \Omega} f(x) \}$. 
Since our chain is a Metropolis chain \cite{peresbook}, its stationary distribution is $\pi_{\lambda}$. 
Therefore,

\begin{align}
 \lim_{\lambda \rightarrow +\infty} \pi_{\lambda}(\omega) &= \lim_{\lambda \rightarrow +\infty}  \frac{ \lambda^{f(\omega)}/\lambda^{f^*} } {|\Omega^*|+\sum_{\omega \in \Omega-\Omega^*} \lambda^{f(\omega)}/\lambda^{f^*} } \\ \nonumber
                                                          &= \frac{ I(\omega \in \Omega^*)}{|\Omega^*|}\nonumber
\end{align}

\noindent where $I(\alpha \in A)$ is an indicator variable equal to 1 if element $\alpha$ belongs to set $A$, otherwise 0. 
If $G,H$ are isomorphic then $f^*=1$ and therefore the above result suggests that the Metropolis Chain converges to the uniform distribution  over the set  $\{\sigma: L_G =L_{H^{(\sigma)}}, \sigma \in S_n \}$.
\end{proof}

It is worth emphasizing that our result is valid even when the two Laplacians are co-spectral. 
For instance Figure~\ref{fig:cospectral} shows two co-spectral graphs with respect to their Laplacians. 
Both Laplacians share the same set of eigenvalues $\{ 0,0.76,2,3,3,5.24 \}$. However, over the space of 
$6!$ permutations the minimum generalized condition number is 6.19.

\begin{figure*}
  \centering
  \subfloat[]{\label{fig:cospectral1}\includegraphics[scale=0.60]{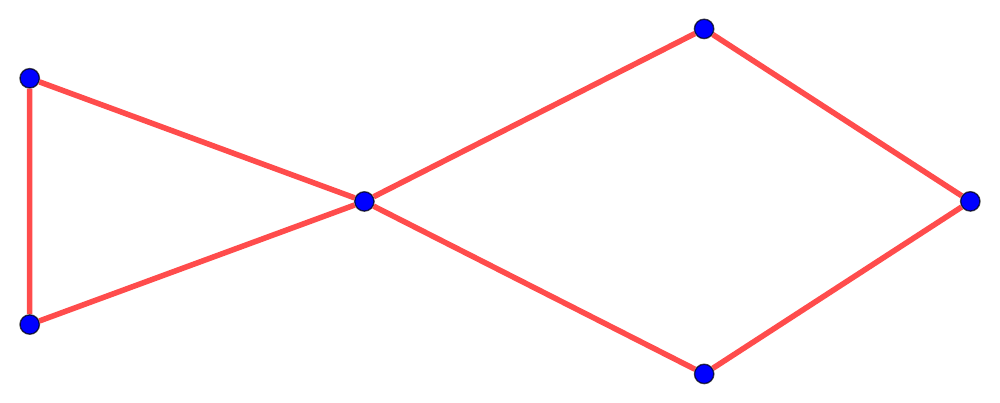}}
  \subfloat[]{\label{fig:cospectral2}\includegraphics[scale=0.60]{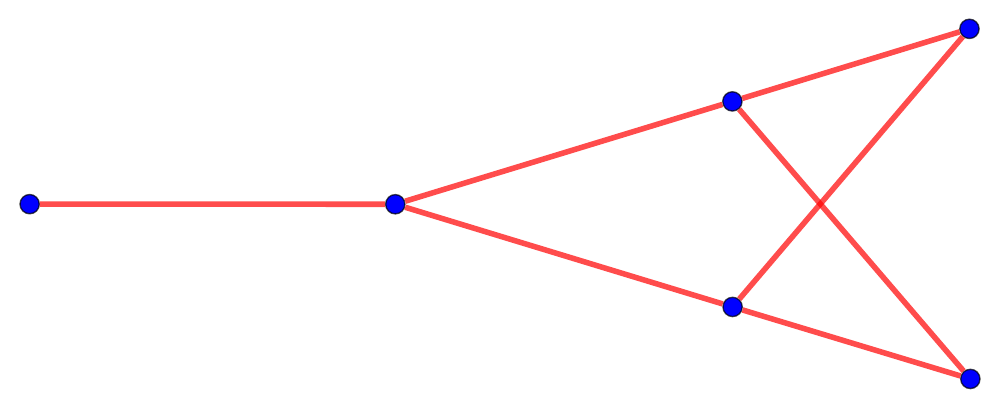}}
  \caption{   Co-spectral Laplacians: The two non-isomorphic graphs have co-spectral Laplacian matrix representation. 
  The minimum generalized condition number over the space of 6! permutations is 6.1852.  }
  \label{fig:cospectral}
\end{figure*}

Our proposed algorithm CondSimGradDescent is shown as Algorithm 2. 
It is a greedy, efficient heuristic. 
The algorithm takes two parameters, the maximum number of iterations $q$ and a 
parameter $\epsilon > 0$ which quantifies the least amount of progress 
required by the algorithm to keep iterating. This may help in avoiding extremely incremental improvements
which do not significantly improve the graph matching but cost a lot computationaly. 
Algorithm 2 performs gradient descent with respect to the generalized condition number using transpositions. 
On the one hand, algorithm 2 tends to be computationally more aggressive than the Metropolis chain 
in the sense that it always moves to a state/permutation which results in a smaller generalized condition number. 
On the other hand the Metropolis chain due to the randomization is likely to avoid local minima. 
Our algorithm returns the permutation which defines the best graph alignment found and the corresponding condition number. 

The complexity  of our algorithm 2  depends on the choice of algorithm  that solves the generalized eigenvalue problem. 
Specifically, let $f(L_G,L_H)$ be the corresponding running time as a function of the two Laplacians. Also let $q$ abbreviate the maximum number of iterations MAXITER. 
Then the total running time is upper bounded by $O(q n^2 f(L_G,L_H))$ since we perform $q$ steps and at each step we compute the generalized condition number
for the ${n \choose 2}$ possible transpositions. 
In our experiments we use the algorithm of Golub and Ye \cite{golub}. The speed of convergence is given in Lemma 1, p. 8 \cite{golub}. 
For our purposes, since we set the number of iterations (which are matrix-vector multiplications) of the Golub-Ye algorithm
to a constant we may assume that the running time that computing the smallest non-trivial and the largest generalized eigenvalue of the pencil $(L_G,L_H)$
is linear in the total number of edges $|E_G|+|E_H|=O(m)$ where $m = \max{ (|E_G|,|E_H|)}$. 
It is worth noticing that using a series of transpositions we can reach any permutation from any starting 
permutation. 
If $m$ is large, e.g., $m \gg n\log{n}$, one can use the developed theory of spectral sparsifiers  
to speed up the  generalized condition number computations. Specifically, one may perform first the   Spielman-Srivastava
sparsification \cite{srivastava} on both Laplacians $L_G,L_H$, obtain spectrally equivalent matrices $\tilde{L}_G,\tilde{L}_H$
and apply our algorithm on the latter Laplacians.

\begin{algorithm} 
 \caption{CondSimGradDescent}
 \label{alg3} 
 \begin{algorithmic} 
  \REQUIRE $L_G,L_H$ the Laplacian matrix representation of $G,H$ respectively.  $q$ (Maximum number of iterations). $\epsilon > 0$ (Tolerance).
  \COMMENT{$\sigma$ initialized to the identity permutation}
  \STATE $\sigma \leftarrow (1,2,..,n)$ 
  \STATE $i \leftarrow 0$ 
   \WHILE{$i \leq q$  }
    \STATE $i \leftarrow i+1$
    \STATE $\sigma^* \leftarrow \arg\max_{\sigma' \in S'}  \kappa(L_G,L_{H^{(\sigma)}})-\kappa(L_G,L_{H^{(\sigma')}}) $ 
     where $S'$ is  the set of all permutations which differ from $\sigma$. a single transposition.
     \IF{ $\kappa(L_G,L_{H^{(\sigma)}})-\kappa(L_G,L_{H^{(\sigma')}}) > \epsilon$ }
          \STATE $\sigma \leftarrow \sigma^*$
          \STATE CN$\leftarrow \kappa(L_G,L_{H^{(\sigma)}})$ 
     \ELSE 
          \STATE {\it break}
     \ENDIF
     \IF{CN $=1$}
       \STATE {\it break}
     \ENDIF 
  \ENDWHILE
  \STATE Return ($\sigma$, CN)
 \end{algorithmic}
\end{algorithm}

\subsection{Further Insight: Theory of Support Trees}
\label{subsec:insight} 

We proved in Theorem~\ref{thrm:thrm1} that when the generalized condition number is 1, then 
indeed $G,H$ can be perfectly matched, i.e., $G,H$ are isomorphic. To complete the justification of our rationale
behind the choice of our similarity measure    we need to explain why does a value close to 1 
imply a good graph alignment.  The answer lies in the theory of support preconditioners \cite{gremban,support}. 
In the following, let  $A,B$ be Laplacian matrices. 

\begin{definition}[Support] 
The support $\sigma(A,B)$ of matrix $B$ for $A$ is the greatest lower bound over all $\tau$ such that $\tau B -A$ is positive
semidefinite, i.e., $\sigma(A,B) = \lim \text{inf} \{ \tau: \tau B - A \succeq 0\}$. 
\end{definition}

\begin{definition}[Congestion \& Dilation] 
An embedding of $H$ into $G$ is a mapping of vertices of $H$ onto vertices of $G$, and edges of $H$
onto paths in $G$. The dilation $d(G,H)$ of the embedding is the length of the longest path in $G$ onto which an edge of $H$ is
mapped. The congestion $g_e(G,H)$ of an edge $e$ in G is the number of paths of the embedding that contain $e$. The congestion $g(G,H)$ 
of the embedding is the maximum congestion of the edges in G. 
\end{definition}

The following facts have been proved in Gremban's Ph.D. thesis \cite{gremban}: 
(a) The support number $\sigma(A,B)$ is bounded above by the maximum product of dilation and congestion  over all embedding of $A$ into $B$. 
(b) $\kappa(A,B) \leq \sigma(A,B) \sigma(B,A)$, Lemma 4.8 \cite{gremban}.
Fact (b), in combination with fact (a), shows that the generalized condition number is closely related to 
the goodness of two embeddings, i.e., of $H$ into $G$ and vice versa. When both the dilation and the 
congestion of the embeddings --or in our terminology of the alignments-- are small then the generalized condition number is small.


\section{Experiments}
\label{sec:exp}
\begin{table}[ht]
\centering 
\begin{tabular}{|l|r|r|} \hline
Name (Abbr.)                          & Nodes (n) & Edges (m) \\ \hline
\textcolor{red}{$\odot$} Netscience   & 1589      & 2742     \\ \hline
\textcolor{red}{$\odot$} Football     & 115       & 613     \\ \hline  
\textcolor{red}{$\odot$} Political Books & 105 & 441     \\ \hline
\textcolor{cyan}{$\star$} Erd\"{o}s '72   & 5488 & 7085     \\ \hline
\textcolor{cyan}{$\star$} Erd\"{o}s '82    & 5822 & 7375     \\ \hline
\textcolor{cyan}{$\star$} Erd\"{o}s '02    & 6927 & 8472     \\ \hline
\textcolor{cyan}{$\star$} Roget Thesaurus     & 1022 & 3648     \\ \hline
\end{tabular}
\caption{Datasets}
\label{tab:datasets}
\end{table}

In Sections~\ref{subsec:datasets},~\ref{subsec:expset} we describe the datasets we used in our experiments and  the experimental setup.
In Sections~\ref{subsec:simplexsimexp},~\ref{subsec:condsimexp} we provide an experimental evaluation of our proposed methods respectively.

\subsection{Datasets} 
\label{subsec:datasets} 

\begin{figure} 
\centering
\includegraphics[width=0.5\textwidth]{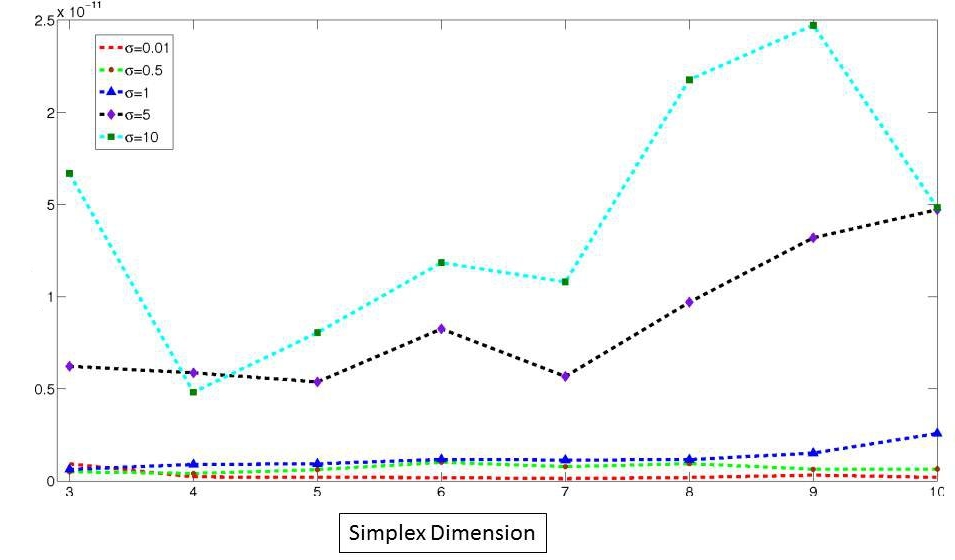} 
\caption{Performance of simplex fitting on 1000 points drawn uniformly at random from a 
randomly generated $k$-simplex perturbed by Gaussian noise $N(0,\sigma^2)$. Figure plots the sum of Euclidean distances of the  $k+1$ reconstructed
simplex vertices from the $k+1$ true vertices as a function of the dimensionality $k$ of the simplex
for five  different standard deviations $\sigma=0.01,0.5,1,5,10$. Notice that the simplex fitting method (essentially)
perfectly recovers the true simplex in all cases.} 
\label{fig:synthsimp} 
\end{figure}

Table~\ref{tab:datasets} summarizes the real-world datasets we used for our experiments. 
Whenever neccessary, graphs are made undirected, unweighted and self-edges were removed. 
Datasets annotated with \textcolor{red}{$\odot$} and \textcolor{cyan}{$\star$}  are available online from 
\cite{misc1,misc2} respectively. 
We pick small and medium sized networks deliberately since the geometric
structure in such networks is striking. 

We also generate several synthetic datasets. 
Specifically,  for Section~\ref{subsubsec:synthsimplexsim} we generate a cloud of points where each point is chosen uniformly at random 
from a random $k$-simplex (see Appendix \cite{peresbook}) and a random stratified social network,
see Section IIIA of \cite{leicht}. 
Notice that the first type of synthetic data involves no graph and its goal is to test the 
goodness of the simplex fitting method.
Stratified networks model the phenomenon according to which  individuals  make connections with individuals similar to them
 with respect to some criterion, e.g., income, age. 
For each vertex we pick an age from 1 to 10, chosen uniformly at random. 
Two vertices with age $i$ and $j$ respectively are connected with probability $p_0 e^{-\alpha \Delta t}$
where $\Delta t = |i-j|$. The parameters are set to $\alpha=0.8$ and $p_0=0.1$. 
Finally, for Section~\ref{subsec:condsimexp} we generate random graphs of two types. 
Erd\"{o}s-R\'{e}nyi-Gilbert  graphs \cite{bela} and R-MAT graphs \cite{rmat}. 
For the former we use $p=0.5$ and for the latter the parameters
are set to $[a = 0.55, b = 0.1; c = 0.1, d = 0.25]$.

\begin{figure} 
\centering
\includegraphics[width=0.5\textwidth]{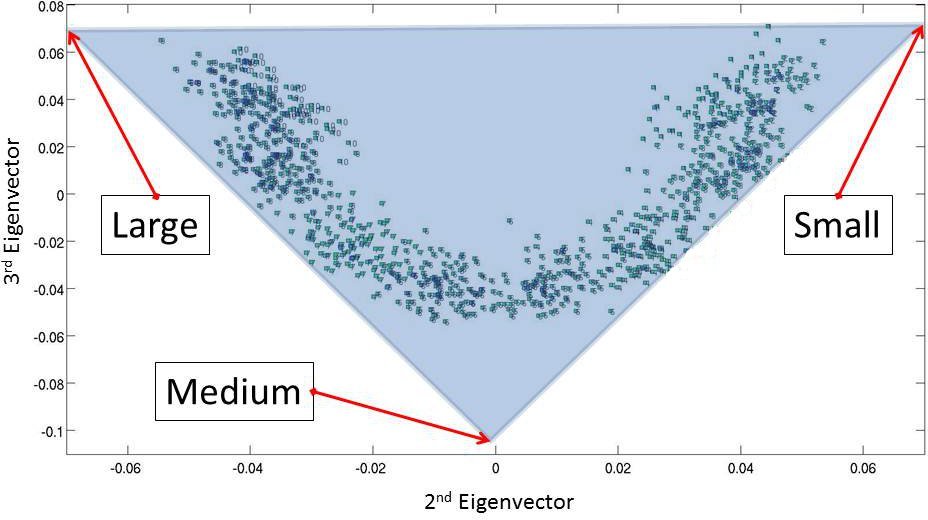} 
\caption{2-simplex fitted on a random stratified network. VertexSim correctly assigns higher similarity
values to vertices of the same age. The three vertices of the fitted 2-simplex conceptually represent
the concepts 'senior/large age' (8-10), 'middle-aged/medium age'(4-7) and 'young/small age' (1-3).}
\label{fig:stratified} 
\end{figure} 

\subsection{Experimental Setup and Implementation Details}
\label{subsec:expset}

The experiments were performed on a single machine, with Intel Xeon CPU
at 2.83 GHz, 6144KB cache size and and 50GB of main memory. 
Our algorithms are implemented in MATLAB.   
The results we show are obtained for setting the parameter $\gamma$ was set to 1 and the dimensionality $k$ of the embedding equal to 2.
Clearly our method is valuable when $k$ is larger than 3 where visualization is impossible (see also 
Section~\ref{subsec:prelimunmix} for a discussion of the computational complexity as a function of $k$). Here we report results for $k=2$
for visualization purposes.  It is worth pointing out two more facts concerning our experimental section: 
VertexSim in our experiments was not affected by the value of the parameter $\gamma$  since there were no outliers
in any of the embeddings and secondly we experimented with higher values of $k$ (from 3 to 5) obtaining interpretable results.  

  The wall-clock times we report use the Golub-Ye  algorithm \cite{golub} as a subroutine to compute condition numbers. 
In order to use this algorithm which is designed for positive definite pencils we shift slightly the spectrum of the Laplacians,
namely we set $L' = L+ \epsilon \frac{{\mathbf 1} {\mathbf 1}^T }{n}$ where $\epsilon$ is a small positive constant. 
This is a natural ``trick'' to compute the generalized condition number.
We use the default settings of the {\it eigifp} software which is available online at \url{http://www.ms.uky.edu/~qye/software.html}. 
It is worth mentioning that given that our graphs are small- and medium-sized, we checked the quality 
of this ``trick''. We observed  that when we set $\epsilon=0.01$, we obtain essentially accurate condition numbers. 
For instance, assume we permute the set of vertices of the Football network according to a randomly 
generated permutation. 
We compute the generalized condition number using the shifting trick and the Golub-Ye algorithm
and exactly by computing the eigenvalues of $(L_b)^\dagger L_a$. In the former case we obtain 
52.592 and in the latter 52.591.  This is a representative example of what we observe in practice,
which also explains why this shifting heuristic is frequently used, see Section 6 in \cite{sunicml}.

The parameters of CondSimGradDesc were set to $q=200$, $\epsilon=0$ for all experiments in Section~\ref{subsec:condsimexp}.

A final remark with respect to the experiments of CondSim in Section~\ref{subsec:condsimexp}:
it is a well known fact that a permutation can be decomposed in cycles and that a random permutation has $O(\log{n})$ cycles  in expectation \cite{wilf}.  
Therefore, if we generate only permutations chosen uniformly at random we are restricting ourselves with high probability
to permutations which share a common structure. To avoid a potential artifact in our experimental results, we generate permutations
with a  different  number of cycles. We use a simple recursive algorithm \cite{wilf} to generate a permutation with $k$ cycles uniformly at random  
in our experiments, see p.33 \cite{wilf}.
Finally, we use third-party software and specifically the code of \cite{mount,gleich} and Jeremy Kepner's R-MAT code implementation.

\begin{figure*}
  \centering
  \subfloat[Football]{\label{fig:fig4a}\includegraphics[scale=0.22]{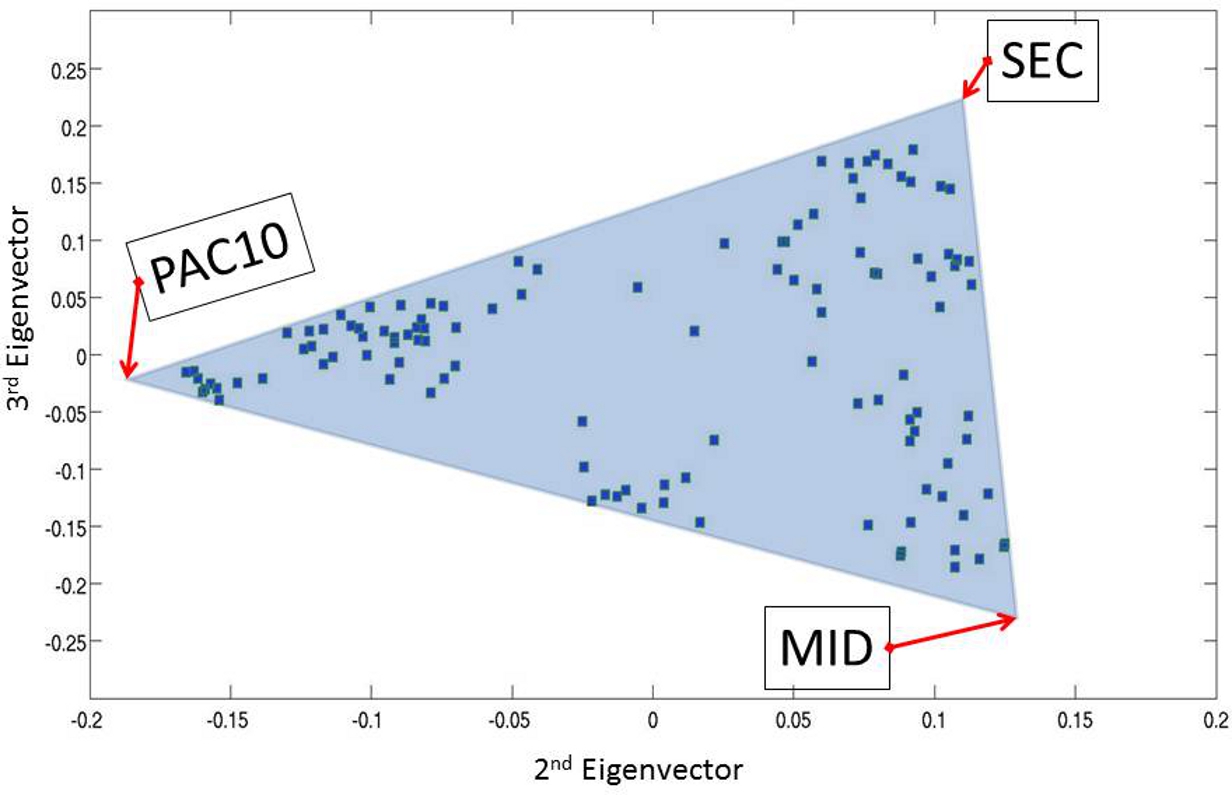}}
  \subfloat[Political books]{\label{fig:fig4b}\includegraphics[scale=0.30]{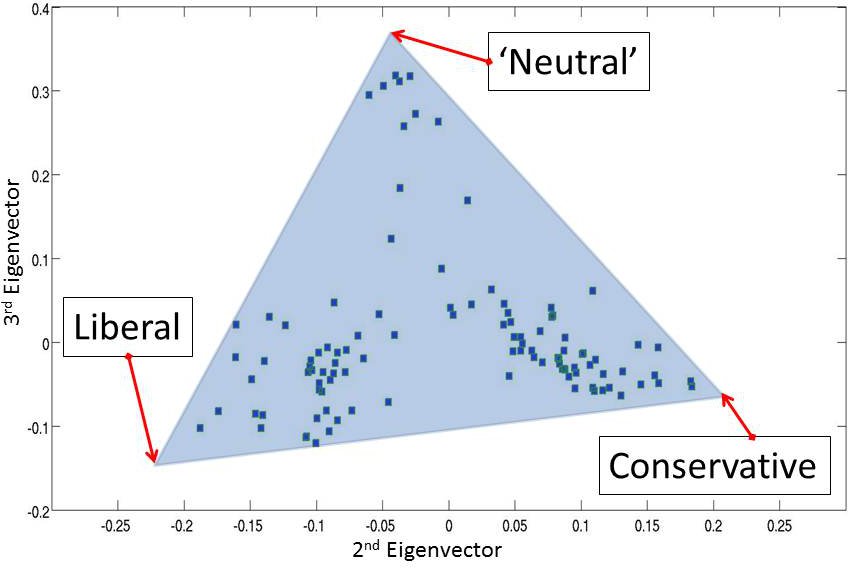}}
  \caption{ Minimum area 2-simplexes for the (a) Football network (b) Political books network. In both 
   cases VertexSim provides significant mining capabilities for extracting pairs of highly similar vertices 
   and concepts. For more see Section~\ref{subsubsec:realsimplexsim}.}
  \label{fig:fig4}
\end{figure*}

\subsection{VertexSim at Work}
\label{subsec:simplexsimexp}

\subsubsection{Synthetic Data} 
\label{subsubsec:synthsimplexsim}

We validate the VertexSim algorithm in two ways:  first we verify that it can successfully recover the simplex $\mathcal{S}$
and hence the mixture coefficients of data points sampled uniformly at random from $\mathcal{S}$
and secondly we evaluate its performance on a stratified network. 
Figure~\ref{fig:synthsimp} shows the performance of our fitting method
as a function of the simplex dimension (x-axis) for five different standard deviations $\sigma=0.01,0.5,1,5,10$ (5 lines) 
for a randomly generated $k$-simplex.  The quality of the performance (y-axis) is quantified as the sum $ \sum_{i=1}^{k+1} || v_i - \tilde{v}_i|| $
where $\tilde{v}_i$ is the reconstructed vertex of the $k$-simplex. 
The performance is excellent as Fig.~\ref{fig:synthsimp} shows. The average running time for four executions is 0.0071 and the variance $1.4 \times 10^{-6}$. 
It is worth mentioning that we also tried the Chan et al. algorithm \cite{tsang1} obtaining exactly the same simplex.

Figure~\ref{fig:stratified} shows the performance of VertexSim for a stratified network with 
$\alpha=0.8$ and $p_0=0.1$ and ages ranging from 1 to 10, picked uniformly at random
for every vertex. Specifically Fig.~\ref{fig:stratified} shows the fitted 2-simplex. 
Upon performing step 3 of Algorithm 1, it becomes apparent that pairs of vertices with the 
same age are significantly more similar than vertices with different ages, as a good 
vertex similarity algorithm should have as its output. Furthermore the three vertices 
of the 2-simplex correspond to the three concepts 'senior/large age', 'middle-aged/medium age'
and 'young/small age'.

\subsubsection{Real-world Data}
\label{subsubsec:realsimplexsim}

Figure~\ref{fig:fig4}(a) shows the minimum area fitted 2-simplex for the American football college network,
whose vertices correspond to teams and edges to games among them. 
According to \cite{girvan} the teams are divided into conferences containing around 8–12 teams each 
and the frequency of games between members of the same conference is higher than between members
of different conferences.  The three vertices of the fitted simplex correspond to three conferences
PAC 10, SEC and MID. Furthermore, VertexSim using the fitted mixture coefficients 
assigns higher similarity to vertices of the same conference. 
The fitting algorithm took 7.5 seconds to find the simplex.
Similar remarks hold for Figure~\ref{fig:fig4}(b) which shows the 
minimum area fitted 2-simplex for the political books network
whose vertices represent books and edges copurchasing by the same buyer. 
The three vertices of the simplex correspond to liberal, convervative
and 'neutral' books.  For both datasets, the vectors of mixture coefficients  provide us a 
novel way to determine vertex similarities in an interpretable way. 
The fitting algorithm needs 6 seconds to compute the simplex. 
We obtain highly interpretable results for other datasets as well. 
Indicately we report few highly similar pairs of vertices according to VertexSim:  (musician, poetry),
(melody, poetry), (voice, hearing) from the Roget Thesaurus network and 
(Vojtech R\"{o}dl, Noga Alon), (Joel Spencer, Jan\`{o}s Pach) from the 
Erd\"{o}s collaboration network, 1972.  

\begin{table} 
\centering 
\begin{tabular}{ |c |c |c |c |c |c |c|} \hline
\cellcolor[gray]{0.9}      &         \multicolumn{3}{|c|}{Erd\"{o}s-Renyi}           &     \multicolumn{3}{|c|}{R-MAT}               \\ \hline
$|V|$                       &         8        &      16             & 32             &         8         &       16        & 32                 \\ \hline
CondSimGradDescent                   &         6/7      &      5/15           & 7/31           &         5/7       &       5/15      & 11/31                 \\ \hline
Belief Propagation (BP)\cite{gleich}                 &         0/7      &      0/15           & 0/31           &         0/7       &       0/15      & 0/31                 \\ \hline
\end{tabular}
\caption{Results of our method versus the Belief Propagation method of \cite{gleich} on various random networks for permutations 
whose number of cycles ranges from 1 to $|V|-1$. The fractions indicate how many times did an algorithm find a permutation
which makes the original graph and its permuted version exactly the same. } 
\label{tab:results} 
\end{table}

\subsection{CondSimGradDescent at Work}
\label{subsec:condsimexp}

\subsubsection{Synthetic Data} 
\label{subsubsec:condsimsynthetic} 

We compare CondSimGradDescent with the belief propagation method of \cite{gleich} for few synthetic datasets
in the following way. We generate a random graph of $n$ vertices and a permutation with $k$ cycles, where 
$k$ ranges from 1 to $n-1$. We do not consider the identity permutation with $n$ cycles. 
We permute the graph according to the random permutation and see whether the graph matching 
methods can align perfectly the original graph and its permuted version. We use two types of random graphs, 
namely binomial random graphs \cite{bela}  and R-MAT graphs with 8, 16, 32 vertices. 
Table~\ref{tab:results} shows the results. As we see, CondSimGradDescent outperforms significantly BP \cite{gleich}. 
It is worth pointing out again that this is a validation test and that fast graph isomorphism
tests exist \cite{bela}.
An interesting trend we observe is that the fewer the cycles of the permutation, 
the easier CondSimGradDescent gets trapped into local minima. 
On the positive side, when the number of cycles is small CondSimGradDescent typically finds an optimal alignment efficiently.

\subsubsection{CondSimGradDescent as a Post-Processing Tool} 
\label{subsubsec:condsimreal} 

Due to the computational cost of CondSimGradDescent, a realistic use of it in large networks is as a post-processing tool. 
We describe a typical use of CondSimGradDescent as a post-processing tool which 
significantly improves the graph alignment in combination with the Belief-Propagation
based method of \cite{gleich}. 
We perform the following experiment: we consider the {\it Football} network. 
We generate a permutation $\sigma$ uniformly at random and permute the labels of $G$ accordingly. 
The number of fixed points of the permutation we obtained is 0. 
We apply the function {\it netalignbp()} which  is open-sourced \cite{gleich}. The alignment produced by \cite{gleich} 
has recognized correctly 50 out of the 115 correct vertex to vertex assignments. The generalized condition number equals 29.4. 
Applying the CondSimGradDesc method to the alignment obtained from Belief Propagation, we obtain a generalized condition number of value 4.16
resulting in 74 correct assignments. Each iteration of CondSimGradDesc lasts in average 90 seconds with 
standard deviation equal to 2 seconds over the 200 iterations.
  It is worth outlining that the Belief-Propagation based method of \cite{gleich} 
is designed for the setting where there exists a reasonable guess for the graph alignment. 
One should conclude only that our proposed method is useful as a postprocessing tool,
and not draw any negative conclusions on the performance of \cite{gleich}.

\section{Conclusion}
\label{sec:disc} 
{\bf Summary:} In this work we contribute to the important problem of quantifying 
vertex similarity in networks by introducing novel approaches to
the problems of vertex similarity within and across
two graphs with the same number of vertices respectively. 
We observed an excellent performance of our algorithms both 
on synthetic and real-world networks. This verifies empirically that both 
the proposed conceptual approaches as well as the algorithmic solutions 
are valuable. 

{\bf Discussion \& Open Problems:} Concerning  our first 
algorithm VertexSim, two natural questions arise: Is there always geometric structure in social networks?  Can we fit other geometric objects 
such as simplicial complexes to capture more complex geometric structure? 
Concerning the first question, Leskovec et al. \cite{jure} have studied extensively properties of large scale networks and it appears that there 
exists strong geometric structure in small and medium sized networks like the ones we studied in Section~\ref{sec:exp} 
but the structure typically decays as the size of the network grows. 
The answer to the second question is an interesting research problem.

Concerning our second algorithm CondSimGradDescent, an interesting research direction is to extend it to cases where 
the two graphs have a different number of vertices. 
The theory of Steiner tree preconditioners \cite{koutis} is a promising 
approach.  Also, understanding the performance of CondSimGradDescent in the isomorphism setting 
is another interesting problem. 
Finally, understanding its performance in simple graphs, e.g., trees, 
remains open.

 \section*{Acknowledgements} 

Research supported by NSF Grant No. CCF-1013110. 
I would like to thank Ioannis Koutis, Gary Miller for several discussions 
on the theory of support tree preconditioners and Jure Leskovec for 
discussions concerning the geometry of social networks. 
Also, I would like to thank the reviewers for their thorough comments.



\begin{thebibliography}{99}
\bibitem{misc1}
\url{http://www.cise.ufl.edu/research/sparse/matrices/Newman/index.html}

\bibitem{misc2}
\url{http://www.cise.ufl.edu/research/sparse/matrices/Pajek/index.html} 

\bibitem{support}
\url{http://www.sandia.gov/~bahendr/support.html}.

\bibitem{letters} 
Pattern Recognition Letters - Special issue: 
{\em Graph-based representations in pattern recognition.}
Vol. 24(8), May 2003



\bibitem{adamic}
Adamic, L.  Adar, E:
{\em Friends and neighbors on the web.}
Social Networks, Vol. 25(3), pp. 211–230 (2003)





 
\bibitem{arora} 
Arora, S. and Ge, R. and Kannan, R. and Moitra, A.:
{\em Computing a nonnegative matrix factorization -- provably.}
Proceedings of the 44th symposium on Theory of Computing (STOC'12), pp. 145-162 (2012) 



\bibitem{mount} 
Arya, S., Mount,  D. M., Netanyahu, N. S., Silverman, R., Wu, A. Y.:
{\em An Optimal Algorithm for Approximate Nearest Neighbor Searching.} 
J. of the ACM, Vol. 45, p. 891-923 (1998)


\bibitem{bai}
Bai, X., Yu, H., Hancock, E.R: 
{\em Graph Matching using Spectral Embedding and Alignment.}
ICPR Vol. 3,  pp. 398-401 (2004)

\bibitem{gleich}
Bayati, M., Gerritsesn, M., Gleich, D.F., Saberi, A., Wang, Y.:
{\em Algorithms for large, sparse network alignment problems.}
IEEE ICDM, pp. 705-710 (2009) 


\bibitem{belkin}
Belkin, M., Niyohi, P.:
{\em Laplacian Eigenmaps for Dimensionality Reduction and Data Representation.}
Neural Computation, Vol. 15(6), pp. 1373-1396 (2003)




\bibitem{blondel2}
Blondel, V. D., Gajardo, A., Heymans, M., Senellart, P., Van Dooren, P.:
{\em A measure of similarity between graph vertices.}
SIAM Review, Vol. 46, pp. 647-666 (2003) 

\bibitem{bela} 
Bollob\'{a}s, B.:
{\em Random Graphs.}
Cambridge Studies in advanced mathematics


\bibitem{bronstein} 
Bronstein, A., Bronstein, M., Castellani, U., Dubrovina, A., Guibas, L. et al.: 
{\em SHREC 2010: robust correspondence benchmark} 
Eurographics Workshop on 3D Object Retrieval (3DOR'10) (2010) 

\bibitem{burk}
Burkard, R., Dell'Amico,  Martello, S.:
{\em Assignment Problems.} 
SIAM Monographs on Discrete Mathematics and Applications (2009)


\bibitem{rmat}
Chakrabarti, D., Zhan, Y. Faloutsos, C.:
{\em R-MAT: A recursive model for graph mining.}
SIAM DM (2004)

\bibitem{tsang1} 
Chan, T.-H., Chi, C.-Y., Huang, Y.M, Ma, K.: 
{\em Convex analysis based minimum-volume enclosing simplex algorithm for hyperspectral unmixing.}
ICASSP, pp. 1089-1092 (2009)







\bibitem{craig} 
Craig, M.:
{\em Minimum-volume transforms for remotely sensed data.}
IEEE Transactions on Geoscience and Remote Sensing, Vol. 3(3), p. 542-552 (1994)

\bibitem{cutler} 
Cutler, A.:
{\em Remembering Leo Breiman.}
Annals of Applied Statistics, Vol. 4(4), pp. 1621-1633 (2010)

\bibitem{cutler2}
Cutler, A., Breiman, L.:
{\em Archetypal Analysis.}
Technometrics, Vol. 36, pp. 338-347 (1994)





\bibitem{fouss}
Fouss, F., Pirotte, A., Renders, J-M., Saerens, M.:
{\em Random-Walk Computation of Similarities between Nodes of a Graph with Application to Collaborative Recommendation.} 
IEEE Trans. Knowl. Data Eng., Vol. 19(3), pp. 355-369 (2007)






\bibitem{gold} 
Gold, S., Rangarajan, A.:
{\em A Graduated Assignment Algorithm for Graph Matching.}
IEEE Trans. Pattern Anal. Mach. Intell., Vol. 18(4), pp. 377-388 (1996)


\bibitem{golub}
Golub, G., Ye, Q.: 
{\em An Inverse Free Preconditioned Krylov Subspace Method for Symmetric Generalized Eigenvalue Problems.}
SIAM J. on Scientific Computing, Vol. 24, pp. 312-334. 

\bibitem{golubvanloan} 
Golub, G., Van Loan, C.F.:
{\em Matrix Computations.}
JHU Press (1996)

\bibitem{gremban}
Gremban, K.:
{\em Combinatorial Preconditioners for Sparse, Symmetric, Diagonally Dominant Linear Systems.}
Ph.D. Thesis, CMU (1996)




\bibitem{hay}
Hay, M., Miklau, G., Jensen, D., Towsley, D.F., Weis, P.:
{\em Resisting structural re-identification in anonymized social networks.}
PVLDB, Vol. 1(1), pp. 102-114 (2008)


 

\bibitem{henderson2012rolx}
Henderson, K. and Gallagher, B. and Eliassi-Rad, T., et al.: 
{\em RolX: structural role extraction \& mining in large graphs} 
Proceedings of the 18th ACM SIGKDD international conference on Knowledge discovery and data mining (2012)
 
\bibitem{huggins} 
Huggins, P., Pachter, L., Strumfels, B.: 
{\em Toward the Human Genotope.}
Bulletin of Math. Biology, Vol. 69(8), pp. 2723-2725 (2007) 




\bibitem{simrank} 
Jeh, G., Widom, J.:
{\em SimRank: a measure of structural-context similarity.}
ACM SIGKDD, pp. 538-543 (2002)





\bibitem{keshava}
Keshava, N.:
{A Survey of Spectral Unmixing Algorithms.}
Lincoln Laboratory Journal, Vol. 14(1), pp. 55-78 (2003)

\bibitem{klau}
Klau, G.:
{\em A new graph-based method for pairwise global network alignment.} 
BMC Bioinformatics, 10(Suppl 1):S59 (2009).


\bibitem{hits} 
Kleinberg, J.:
{\em Authoritative sources in a hyperlinked environment.}
J. of the ACM, Vol. 46, pp. 614-632 (1999)



\bibitem{koutis}
Koutis, I.: 
{\em Combinatorial and algebraic tools for multigrid algorithms.}
Ph.D. Thesis, CMU (2007)



\bibitem{lee} 
Lee, J.A., Verleysen, M:
{\em Nonlinear Dimensionality Reduction.} 
Information Science and Statistics


\bibitem{leicht}
Leicht, E. A., Holme, P., Newman, M. E. J.:
{\em Vertex similarity in networks.}
Phys. Rev. E 73, 026120 (2006)



\bibitem{peresbook}
Levin, D., Peres, Y., Wilmer, E.: 
{\em Markov Chains and Mixing Times.}
American Mathematical Society (2008) 

\bibitem{jure}
Leskovec, J., Lang, K., Dasgupta, A., Mahoney, M.:
{\em Statistical properties of community structure in large social and information networks.}
WWW, pp. 695-704 (2008)




\bibitem{liben} 
Liben-Nowell, D., Kleinberg, J.: 
{\em The Link Prediction Problem for Social Networks.}
ACM CIKM, (2003)





\bibitem{hector} 
Melnik, S., Garcia-Molina, H., Rahm, E.:
{\em Similarity flooding: a versatile graph matching algorithm.}
ICDE, (2002)

\bibitem{girvan}
Girvan, M., Newman, M. E. J.: Community structure in social
and biological networks. Proc. National Academy of Sciences
(PNAS), Vol. 99(12), pp. 7821-78266 (2002)


\bibitem{rourke}
O'Rourke, J, Aggarwal, A., Maddila, S., Baldwin, M.:
{\em An optimal algorithm for finding minimal enclosing triangles.}
J. Algorithms, Vol. 7(2), pp. 258-269 (1986)


\bibitem{packer}
Packer, A.:
{\em NP-hardness of largest contained and smallest containing simplices for V - and H-polytopes.}
Discrete Comput. Geom, Vol. 28, pp. 349-377 (2002)
\bibitem{whitepaper} 
Riedesel, P.:
{\em Archetypal Analysis in Marketing Research: A New Way of Understanding Consumer Heterogeneity.}
Action Marketing Research



\bibitem{sdp} 
Schellewald, C., Schn\"{o}rr, C.:
{\em Probabilistic subgraph matching based on convex relaxation.}
Energy Minimization Methods in Computer vision and Pattern Recognition (2005)

\bibitem{schmidt}
Schmidt, M, Fung, G., Rosales, R.:
{\em Fast Optimization Methods for L1 Regularization: A Comparative Study and Two New Approaches.} 
ECML-PKDD (2007)


\bibitem{shapiro} 
Shapiro, L.S., Brady, M.J.: 
{\em Feature-based correspondence: an eigenvector approach.}
Image and vision computing, Vol. 10(5), pp. 283-288 (1992)








\bibitem{vision} 
Shi, J., Malik, J.:
{\em Normalized cuts and image segmentation.}
IEEE Pattern Analysis and Machine Intelligence, Vol. 22(8), pp. 888 - 905 (2000)





\bibitem{smalter} 
Smalter, A., Huan, J., Lushington, G.:
{\em CPM: a graph pattern matching kernel with diffusion for accurate graph classification.}
Technical Report,  ITTC-FY2009-TR-45910-01 (2008)


\bibitem{srivastava}
Spielman, D., Srivastava, N.: 
{\em Graph sparsification by effective resistances.}
ACM STOC, pp. 563-568 (2008) 


\bibitem{sunicml}
Sun, L. and Ji, S. and Ye, J.:
{\em A least squares formulation for a class of generalized eigenvalue problems in machine learning.}
Proceedings of the 26th Annual International Conference on Machine Learning (ICML'09), pp.  977-984 (2009)
 




\bibitem{tolliver}
Tolliver, D., Tsourakakis, C.E., Subramanian, A., Shackney, S., Schwartz, R.:
{\em Robust unmixing of tumor states in array comparative genomic hybridization data.}
Bioinformatics ISMB, Vol. 26(12), pp. 106-114 (2010)





\bibitem{umeyama}
Umeyama, S.:
{\em An eigendecomposition approach to weighted graph matchings problems.} 
Transactions of pattern analysis and machine intelligence, Vol. 10(5) (1998)




\bibitem{wilf}
Wilf, H.S.:
{\em East Side, West Side...}
Notes  (1999)



\bibitem{zager} 
Zager, L.:
{\em Graph similarity and matching.}
Master thesis, Massachusetts Institute of Technology (2005)

 

\bibitem{suri}
Zhou, Y. and Suri, S.:
{\em Algorithms for minimum volume enclosing simplex in $\field{R}^3$.}
ACM SIAM SODA, pp. 500–509 (2000)

\end{thebibliography}

\end{document}